\documentclass[%
 aip,
 amsmath,amssymb,
 reprint,%
]{revtex4-2}

\usepackage{graphicx}
\usepackage{bm}
\usepackage{amsthm}

\theoremstyle{plain}
\newtheorem{theorem}{Theorem}
\newtheorem{proposition}[theorem]{Proposition}
\newtheorem{lemma}[theorem]{Lemma}

\theoremstyle{definition}
\newtheorem{remark}[theorem]{Remark}

\DeclareMathOperator{\Tr}{Tr}

\DeclareMathOperator{\Ran}{Ran}
\newcommand{\Om}{\Omega}
\newcommand{\Vq}{V_{\mathrm q}}
\newcommand{\Vc}{V_{\mathrm c}}
\newcommand{\Nq}{N_{\mathrm q}}
\newcommand{\half}{\tfrac12}
\newcommand{\Jc}{J}
\newcommand{\R}{\mathbb R}

\newcommand{\Q}{\mathbb Q}

\begin{document}

\preprint{AIP/123-QED}

\title{Quantum-advantage resource of a two-mode Gaussian state: Analytical theory of convex optimization and a Galois no-go for the closed-form solution}

\author{Kunwar Kalra}
\email{kunwar.kalra@tamu.edu}
\affiliation{Department of Mathematics, Texas A\&M University,
College Station, Texas 77843, USA}
\author{V.~V.~Kocharovsky}
\email{vkochar@physics.tamu.edu}
\affiliation{Department of Physics \& Astronomy, Texas A\&M University, College Station, Texas 77843, USA}

\date{\today}

\begin{abstract}
We study the problem of extracting a quantum complexity resource from a mixed Gaussian state of the multimode light. 
We present the first complete, certificate-checked solution to this problem in a genuinely coupled sector. 
We carry this out for the two-mode case, the smallest case in which modes are genuinely coupled. 
Even in this case the solution is highly nontrivial, and we rigorously prove that it cannot be given in a closed form. 
\end{abstract}

\maketitle

\section{Introduction}\label{sec:intro}

Multimode, squeezed and entangled, light produced via parametric down conversion or four wave interaction by means of the optical parametric amplifiers (OPAs), oscillators or nonlinear waveguides and interferometers is central to modern quantum optics science and technology. Optical modes constitute a quantum system of harmonic oscillators described by continuous variables (CVs) of their coordinates and momenta corresponding to quadrature operators of the modes' electric field which are related to the annihilation and creation operators of photons as follows $\hat{q}_k = (\hat{a}_k^\dagger + \hat{a}_k)/\sqrt{2}, \hat{p}_k = i(\hat{a}_k^\dagger - \hat{a}_k)/\sqrt{2}$. 
The most interesting and promising features and applications of CV quantum systems in universal fault-tolerant quantum computing, quantum communications, networking, sensing, metrology, cryptography, etc. are associated with their so-called {\it quantum advantage}, or supremacy, over classical systems and computers. 
Quantum advantage of the multimode light originates from $\sharp$P-hard computational complexity of a joint probability distribution of photon numbers in different modes. It fully manifests itself when, instead of probabilities associated with CVs in the phase space, quantum statistics of discrete variables, that is photon numbers, is involved. 

Such an analysis culminated recently in establishing a quantum complexity resource responsible for quantum advantage of the multimode light. It was done for Gaussian boson sampling in the paper \cite{Oh2024} by Oh and co-authors who found an algorithm that simulates classically the joint quantum statistics of photon numbers in the noisy multimode light, but only if the quantum complexity resource residing in the multimode light is not too large. The dimension of the resource is determined by the size of a matrix under the hafnian which gives photon-counting statistics in accord with the hafnian master theorem established in \cite{LAA2022,PRA2022}. The point is that the hafnian is $\sharp$P-complete for computing \cite{Barvinok2016} and provides a universal tool for analysis of computational $\sharp$P-hardness since, according to Toda's theorem \cite{Toda1991,Basu2012}, there is a deterministic polynomial-time Turing reduction of any problem in the polynomial hierarchy to a counting problem relative to a hafnian oracle. 
Based on this generality, establishing it as a legitimate measure of multimode light's quantum complexity in the quantum-information sense, we name it {\it the quantum-advantage resource} \cite{OPAarXiv2026,Entropy2026}.  

The most well developed and widely employed sources of multimode light in modern studies and applications of quantum optics science and technology are OPAs which, as is well known, generate quantum light in Gaussian states \cite{Serafini2023}. They also are employed in the most advanced recent setups aimed at demonstrating a prototype of a universal fault-tolerant photonic quantum computer 'Aurora' \cite{RadNature2025,Takeda2019} and quantum advantage in Gaussian boson sampling with a linear interferometer \cite{Pan2026,PanPRL2021,Deshpande2022,Madsen2022,Deng2023,Hamilton2017}. In fact, multimode light in a Gaussian state is capable of full $\sharp$P-hardness and quantum advantage. This is the reason why the above resource was introduced and studied for Gaussian states \cite{Oh2024,OPAarXiv2026,Entropy2026}. The present paper also deals with the Gaussian states. A generalization to the non-Gaussian states requires additional analysis as is explained in \cite{OPAarXiv2026}.  

A Gaussian state of $M$ bosonic modes (for instance, optical modes carrying squeezed light) is described, up to a displacement we may set aside, by a real symmetric $M\times M$ \emph{quadrature covariance matrix}  
\begin{equation} \label{VG}
V = \left[ \begin{matrix}
\langle \hat{p}_k \hat{p}_{k'} \rangle 
            &   
\frac{1}{2}\langle \hat{q}_k \hat{p}_{k'}+\hat{p}_{k'} \hat{q}_k \rangle^T
            \\[6pt]
\frac{1}{2}\langle \hat{q}_k \hat{p}_{k'}+\hat{p}_{k'} \hat{q}_k \rangle 
            &   
\langle \hat{q}_k \hat{q}_{k'} \rangle
        \end{matrix} \right] 
= \frac{i}{2}\Omega + \langle \hat{s}\hat{s}^T \rangle. 
\end{equation}
This matrix collects the variances and correlations of the quadrature operators of the modes, listed in the vector $\hat{s}= (\hat{p}_1,\dots,\hat{p}_M,\hat{q}_1,\dots,\hat{q}_M)^T$. These operators do not commute. They obey canonical commutation relations $ [\hat{q}_k,\hat{p}_{k'}]=i\delta_{kk'}$ associated with the \emph{symplectic form} 
\begin{equation} \label{Omega}
\Om=\begin{pmatrix}0&\mathbb{I}_M\\-\mathbb{I}_M&0\end{pmatrix}.
\end{equation}
Heisenberg's uncertainty principle, in its multimode (Robertson--Schr\"odinger) matrix form, states that a real symmetric $V$ is the covariance matrix of an actual quantum state, in which case we call it \emph{physical}, exactly when the Hermitian matrix $V+\tfrac{i}{2}\Om$ is positive semidefinite, $V+\tfrac{i}{2}\Om\succeq0$. The least uncertain state, the vacuum (no photons), sits at $V=\half \mathbb{I}_M$.

To isolate the genuinely nonclassical content of the covariance, $V$, Oh and co-authors introduced~\cite{Oh2024} a semidefinite program (SDP), that is, a convex optimization over positive-semidefinite matrices, that splits $V$ into the classical part $V_c$ and the smallest physical \emph{quantum} part (the quantum complexity resource) $V_q$ which still leaves a legitimate \emph{classical} noise remainder:
\begin{equation}\label{eq:sdp}
\Vq^\star=\arg\min\bigl\{\,\Tr(\Vq)\ :\ \Vq+\tfrac{i}{2}\Om\succeq0,\ \
V-\Vq\succeq0\,\bigr\},
\end{equation}
$$\qquad \Nq(V)=\half\bigl(\Tr\Vq^\star-2\bigr).$$
The first constraint asks the quantum part $\Vq$ to be physical in its own right; the second asks the remainder $\Vc=V-\Vq$ to be a positive matrix, which is precisely the condition for it to be admissible classical noise (it then represents a random classical displacement, an operation that is free to simulate). Among all such splittings the program selects the one of least trace. Because the trace of a covariance matrix is a measure of total energy, minimizing it concentrates all irreducible nonclassicality in $\Vq^\star$ and assigns the rest to classical noise. The associated photon number $\Nq(V)=\half(\Tr\Vq^\star-2)$ counts photons above the vacuum (the trace of vacuum covariance is 2), and it is a certified lower bound on the classical cost of sampling the state. Locating $\Vq^\star$ is therefore the basic problem. 

We recall that the minimum of the functional $\Tr(\Vq)$ subject to the constraints in Eq.~(\ref{eq:sdp}) can be found by the method of Lagrange multipliers, familiar from Lagrangian mechanics, quantum field theory, and statistical physics. It provides first-order equations determining the minimum in question via the saddle point of the Lagrangian function with respect to variation of the original variables and Lagrange multipliers without necessity to take care of the constraints separately. An example, closely related to the multimode system in question, is introducing the term $\mu \hat{N}$ via the chemical potential (Lagrange multiplier) $\mu$ into the effective Hamiltonian to switch from the canonical ensemble, constrained by the particle number conservation condition $\hat{N} = \rm{const}$, to the grand canonical ensemble in the quantum many-body theory of Bose-Einstein condensation \cite{Zubarev1974,JStatPhys2015}. 

We follow the well known Karush–Kuhn–Tucker (KKT) generalization of the Lagrange multiplier method for convex optimization which contains constraints in the form of inequalities rather than just standard Lagrange equalities \cite{BenTalNemirovski}. The only additional complication of the convex optimization (\ref{eq:sdp}) is that it deals not with the usual functions of real-valued variables but with the functional $\Tr(\Vq)$ varied over the set of real symmetric matrices $\Vq$. So, the Lagrange multiplier in our case is also a matrix $S$, and it appears in the generalized Lagrangian via an inner product, $\Tr(S(V-\Vq))$, with the matrix $V-\Vq$ subject to the inequality constraint $V-\Vq\succeq0$ in Eq.~(\ref{eq:sdp}). Here $X\succeq Y$ denotes the L\"owner order, meaning that $X-Y$ is positive semidefinite.   

The goal of the paper is to \emph{determine the resource $\Vq^\star$ explicitly} for the system of two modes, $M=2$. Two modes form the first case in which the modes genuinely couple. For a single mode the answer is the squeezed-vacuum solution \cite{Oh2024,Entropy2026}. The explicit solution to the two-mode quantum-advantage resource problem is the continuous-variable counterpart of the explicit two-mode squeezed-state solution \cite{Serafini2023,Barnett1996,Weedbrook2012} familiar from quantum optics. The complete analytical theory of the two-mode quantum-advantage resource presented in this paper shows that the solution to the two-mode problem is already nontrivial, not as simple as the solution to the two-mode squeezed state problem, and actually cannot be given in the closed form. This explains why the solution was not found earlier. 

The general multimode prescription, building $\Vq^\star$ by ``nullifying the sub-vacuum eigenvalues'' of $V_c$, was left open in our previous paper~\cite{Entropy2026} (problem (iii) there).

\subsection*{Relation to prior work}

We build on the structural results of Ref.~\cite{Purity}, referred to throughout as \emph{Purity}, which establishes, for any number of modes, that the optimizer of~\eqref{eq:sdp} exists, is unique, and is pure, supplies the
closed-form oracle and Riccati identity for the inner problem that we use as our main computational tool, and solves the passive-diagonalizable class in closed form. The present paper carries the two-mode case through the first genuinely coupled sector, where the passive branch of \emph{Purity} no longer applies, and isolates the exact algebraic obstruction to a closed-form solution there; the general multimode problem is left to future work. The 
underlying resource interpretation of $\Nq$ is from~\cite{Oh2024,OPAarXiv2026,Entropy2026}.

\section{The optimizer and the language of duality}\label{sec:background}

\subsection{Results used from \emph{Purity}}\label{sec:purity}

We use two facts established in \emph{Purity}~\cite{Purity}. We state exactly which results are used and define each notion as it appears.

First, \emph{the minimizer of \eqref{eq:sdp} exists, is unique, and is a pure state.} A pure Gaussian state is one with a definite wavefunction, carrying no residual classical mixing; at the level of covariance matrices it saturates the uncertainty relation, all of its symplectic eigenvalues taking the minimum value $\tfrac12$. Every pure two-mode covariance arises from the vacuum $\half \mathbb{I}_4$ by a passive rotation followed by squeezing,
\begin{equation}\label{eq:pure}
\Vq=\half\,O^\top\!\operatorname{diag}\!\bigl(e^{2r_1},e^{2r_2},e^{-2r_1},
e^{-2r_2}\bigr)O,
\end{equation}
$$\qquad O\in\mathrm U(2):=\mathrm O(4)\cap\mathrm{Sp}(4),\ \ r_1,r_2\ge0 .$$
A \emph{passive} transformation $O$, built from beam splitters and phase shifters, conserves photon number and is represented on the quadratures by a matrix that is simultaneously orthogonal and symplectic; such matrices form the group $\mathrm U(2)=\mathrm O(4)\cap\mathrm{Sp}(4)$. The numbers $r_1,r_2\ge0$ are the \emph{squeezing} parameters: along one quadrature of mode $i$ the variance is reduced below the vacuum value to $\tfrac12 e^{-2r_i}$, while its conjugate is increased to $\tfrac12 e^{2r_i}$, with the product held at $\tfrac14$ as the uncertainty relation requires. The eigenvalues of $\Vq$ are accordingly the
reciprocal pairs $\tfrac12 e^{\pm2r_i}$. Because $O$ is orthogonal it leaves the trace unchanged, so the trace depends only on the squeezing,
\begin{equation}\label{eq:framefree}
\Tr(\Vq)=\cosh2r_1+\cosh2r_2 ,
\end{equation}
and \eqref{eq:sdp} is the problem of using the least squeezing for which $\Vq$ still lies below $V$:
\begin{equation}\label{eq:var}
\min_{O\in\mathrm U(2),\,r_{1,2}\ge0}\ \cosh2r_1+\cosh2r_2
\quad\text{subject to} 
\end{equation}
$$O V O^\top\succeq\half\operatorname{diag}(e^{2r_1},e^{2r_2},e^{-2r_1},e^{-2r_2}).$$

Second, \emph{Purity solves the inner problem in closed form through an ``oracle''.} For any fixed positive-definite weight matrix $A\succ0$, the pure state that minimizes the weighted trace $\Tr(A\Vq)$ over all physical $\Vq$ is given explicitly by
\begin{equation}\label{eq:oracle}
\Vq(A)=\half\,A^{-1/2}\bigl|A^{1/2}\Om A^{1/2}\bigr|A^{-1/2},
\end{equation}
$$\qquad \Vq(A)\,A\,\Vq(A)=\tfrac14\,\Om^\top\!A\,\Om ,$$
where $|M|:=\sqrt{M^\top M}$ denotes the matrix absolute value. We call the map $A\mapsto\Vq(A)$ the \emph{oracle}: for each weighting of the quadratures it returns the pure state of least weighted trace. The identity on the right of \eqref{eq:oracle} is a \emph{Riccati identity}. A Riccati equation is a quadratic matrix equation, one in which the unknown $X$ enters to second order through a product $X A X$ (the name is after Jacopo Riccati, and such equations are central to optimal control and filtering). Here it states that the oracle output is exactly the pure $X$ solving $X A X=\tfrac14\Om^\top A\,\Om$. The minimum value attained equals $\Tr\bigl(A\,\Vq(A)\bigr)=\nu_1(A)+\nu_2(A)$, the sum of the two \emph{symplectic eigenvalues} of $A$. The symplectic eigenvalues of a positive matrix $A$ are the positive numbers $\nu_i$ for which $\pm\nu_i$ are the eigenvalues of $i\Om A$. They are the invariants of $A$ under symplectic (canonical, commutation-preserving) changes of frame, and for a physical covariance they are exactly the quantities the uncertainty relation constrains to be $\ge\half$.

\subsection{Certifying optimality: the dual problem}
\label{sec:duality}

The program \eqref{eq:sdp} is a constrained minimization, called the
\emph{primal}. Attached to it is a second optimization, the \emph{dual}, assembled from the constraints; its variable here is a positive-semidefinite matrix $S\succeq0$, a Lagrange multiplier for the matrix inequality $V-\Vq\succeq0$. The dual is a maximization, and \emph{weak duality} is the elementary fact that every value of the dual is a lower bound for every value of the primal. A \emph{dual certificate} is a choice of the dual variable for which this lower bound equals the value of a candidate primal
solution. Producing one therefore \emph{proves} that the candidate is optimal, with no further search needed. In the present setting the criterion takes a concrete form. Writing the weight as $A=I+S$, a feasible pure state $\Vq$ is optimal as soon as there exists some $S\succeq0$ with
\begin{equation} \label{S}
\Vq=\Vq(I+S)\qquad\text{and}\qquad \Ran(S)\subseteq\ker(V-\Vq).
\end{equation}
The first equation says $\Vq$ is the oracle's answer for the weight $I+S$. The second is \emph{complementary slackness}: the multiplier $S$ may act only where the constraint $V-\Vq\succeq0$ is tight, that is, on the kernel of $V-\Vq$, the set of directions along which no classical noise remains. This certificate is exhibited explicitly in each closed-form case below.

That such a certificate exists at all is guaranteed by a mild nondegeneracy assumption. We take $V$ to be \emph{strictly mixed}, meaning all of its symplectic eigenvalues exceed $\half$. This is \emph{Slater's condition}, the standard requirement that the feasible set have a genuine interior point (here, that some quantum state fit strictly inside $V$); it ensures that the primal and dual optima coincide and that the dual optimum is attained, so a certificate is available. The boundary case, in which $V$ is itself already pure and the feasible set degenerates to the single point $\Vq^\star=V$, is recovered by continuity. Throughout, the closed forms derived below are checked against a direct numerical solution of the primal SDP, and the tolerances quoted come from those checks.

\section{The nonclassical directions and the preimage criterion}\label{sec:tools}

The program acts only on the directions in which $V$ is more certain than the vacuum. Diagonalize the real symmetric matrix,
$V=\sum_{j=1}^4\lambda_j\,v_jv_j^\top$, with orthonormal eigenvectors $v_j\in\R^4$ and eigenvalues $\lambda_j$. Call a direction $v_j$ \emph{nonclassical}, or \emph{sub-vacuum}, when its variance dips below the vacuum floor, $\lambda_j<\half$, and let $\kappa$ be the number of such directions. 
They correspond to the directions of the minor axes of Wigner quasi-probability distribution in the phase space and set the universal lower bound for quantum advantage established in \cite{Entropy2026}.
Only these directions matter: on the orthogonal complement, where $V\succeq\half I$, the vacuum already fits beneath $V$ and no quantum part is extracted there.

A single geometric operation organizes the entire analysis: the quarter-turn phase rotation that exchanges position and momentum, $q\mapsto p,\ p\mapsto-q$. On the quadratures it is the orthogonal matrix
\begin{equation}\label{eq:J}
\Jc=\begin{pmatrix}0&-I_2\\ I_2&0\end{pmatrix},\qquad \Jc^2=-I_4 .
\end{equation}
Since $\Jc^2=-I$, it acts like multiplication by $i$: it is a \emph{complex
structure} on the real quadrature space (a real linear map whose square is $-I$, which lets one regard the real space as a complex one). It sends each direction $v$ to a perpendicular partner $\Jc v\perp v$, its \emph{conjugate quadrature}. The reciprocal-pair structure of a pure state \eqref{eq:pure} is precisely the statement that its eigenvectors come in conjugate pairs $(v,\Jc v)$ with
reciprocal eigenvalues $(\rho,\tfrac1{4\rho})$, because the passive $O$ commutes with $\Jc$.

Beyond the oracle we use its converse: a test for when a prescribed pure state is the oracle's output. This test underlies every optimality proof below.

\begin{lemma}\label{lem:preimage}
Let $X$ be a pure covariance and $A\succ0$. Then $\Vq(A)=X$ if and only if the Riccati identity $XAX=\tfrac14\Om^\top\!A\,\Om$ holds, equivalently if and only if $A\Om$ commutes with the complex structure $\Jc_X:=2\Om X$. Consequently a pure feasible $\Vq$ is the optimizer of~\eqref{eq:sdp} if and only if there is an
$S\succeq0$ with $\Ran(S)\subseteq\ker(V-\Vq)$ and $\Vq(I+S)=\Vq$.
\end{lemma}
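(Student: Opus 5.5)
The plan is to reduce everything to the symplectic geometry of the oracle \eqref{eq:oracle} and then use the certificate criterion \eqref{S}.

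\emph{Setup.} Write $X$ for the pure covariance. Purity means $X=\half T^\top T$ for a symplectic $T$, equivalently $(2\Om X)^2=-I$. Hence $\Jc_X:=2\Om X$ is a complex structure, and $X\Om^\top X=\tfrac14\Om^\top$ up to the sign convention fixed by $\Om^2=-I$. I will use this purity identity repeatedly.

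\emph{The Riccati identity forces $X=\Vq(A)$.} The forward direction is the identity already supplied by \emph{Purity} in \eqref{eq:oracle}. For the converse, suppose $XAX=\tfrac14\Om^\top A\Om$ for a pure $X$. Conjugate by $A^{1/2}$ and set $Y:=A^{1/2}XA^{1/2}\succ0$ and $K:=A^{1/2}\Om A^{1/2}$, which is real antisymmetric, so that $K^\top K=-K^2$. The identity then reads $Y^2=\tfrac14 K^\top K$. Since a positive-definite matrix has a unique positive square root, $Y=\half|K|$. This is exactly \eqref{eq:oracle} after undoing the conjugation. The only work is to check that the conjugation sends $\Om^\top A\Om$ to $K^\top K$; that follows from $A^{1/2}\Om^\top A\Om A^{1/2}=K^\top K$.

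\emph{The Riccati identity is equivalent to the commutation condition.} Multiply $XAX=\tfrac14\Om^\top A\Om$ on the left by $2\Om$ and on the right by $2\Om$ or its transpose, and use the purity identity to replace $\Om X$ by $\half\Jc_X$ and $X\Om$ by $-\half\Jc_X^{-1}$ or its analogue. The identity becomes $\Jc_X(A\Om)=(A\Om)\Jc_X$, possibly with $\Om A$ in place of $A\Om$. Every step is an invertible manipulation, so the equivalence runs both ways. I expect this sign and transposition bookkeeping to be the only delicate point.

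\emph{Optimality criterion.} Sufficiency is weak duality. For any feasible $W$,
\[
\Tr W=\Tr\bigl((I+S)W\bigr)-\Tr(SW)\ge\Tr\bigl((I+S)\Vq\bigr)-\Tr(SW).
\]
The inequality holds because $\Vq=\Vq(I+S)$ minimizes the weighted trace. Next, $\Tr(SW)\le\Tr(SV)$ since $V-W\succeq0$ and $S\succeq0$. Complementary slackness, $\Ran(S)\subseteq\ker(V-\Vq)$, gives $\Tr(SV)=\Tr(S\Vq)$. Chaining these yields $\Tr W\ge\Tr\Vq$.

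Necessity uses Slater's condition: $V$ strictly mixed gives a strictly feasible point. Strong duality then yields a dual optimal $S\succeq0$ with zero duality gap. The gap is the sum of two nonnegative terms, each of which must therefore vanish:
\begin{itemize}
\item $\Tr\bigl(S(V-\Vq)\bigr)=0$, which for two positive semidefinite matrices forces $S(V-\Vq)=0$, i.e.\ $\Ran(S)\subseteq\ker(V-\Vq)$;
\item $\Vq$ attains the weighted minimum for $A=I+S\succ0$, so the uniqueness part of the oracle from \emph{Purity} gives $\Vq=\Vq(I+S)$.
\end{itemize}
The main obstacle here is to identify the dual of \eqref{eq:sdp} as $\max_S\{\nu_1(I+S)+\nu_2(I+S)-\Tr(SV)\}$. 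This follows by minimizing the Lagrangian over physical $\Vq$ with the oracle's value formula.
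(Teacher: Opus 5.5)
\textbf{The one step that fails as written} is the commutator equivalence, which you flagged yourself. Since $\Jc_X^2=-I$, we have $\Jc_X^{-1}=-\Jc_X$, so $-\half\Jc_X^{-1}=\Om X$. Your substitution $X\Om\mapsto-\half\Jc_X^{-1}$ therefore asserts $X\Om=\Om X$. For a pure $X$ this forces $X^2=\tfrac14 I$, i.e.\ $X=\half I$.

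The correct identity, which needs no purity, is $X\Om=-\half\Jc_X^{\top}$. If you follow your recipe literally (multiply by $2\Om$ on both sides, then substitute), you get $\Jc_XA\Jc_X=-A$, i.e.\ $[A,\Jc_X]=0$. That is false. Take one mode with $A=\mathrm{diag}(a,b)$, $a\neq b$. The oracle output $X=\half\,\mathrm{diag}(t,1/t)$ with $t=\sqrt{b/a}$ satisfies the Riccati identity, yet $[A,\Jc_X]\neq0$.

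The same example shows that $[A\Om,\Jc_X]=0$ holds while $[\Om A,\Jc_X]=0$ fails (it would require $t^2=a/b$). So your hedge ``possibly with $\Om A$ in place of $A\Om$'' leaves open which of two inequivalent conditions you have proved.

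The repair is one line. Multiply the Riccati identity on the left by $2\Om$ and on the right by $2\Jc_X=4\Om X$, both invertible. Using $X\Om X=\tfrac14\Om$, the left side becomes $\Jc_XA\Om$ and the right side becomes $A\Om\Jc_X$. Equivalently, keep your multipliers, obtain $\Jc_XA\Jc_X^{\top}=A$, and then use $\Jc_X^{\top}=\Om\Jc_X\Om$. The paper instead expands $[A\Om,\Jc_X]=-2(AX+\Om XA\Om)$ and left-multiplies $AX=-\Om XA\Om$ by $X$, or the Riccati identity by $X^{-1}$.

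\textbf{The rest of your proof is correct}, and in two places it does more than the paper.

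First, your inversion argument conjugates by $A^{1/2}$ and invokes uniqueness of the positive semidefinite square root. This proves directly that every positive semidefinite solution of the Riccati identity equals $\Vq(A)$, without even using purity of $X$. The paper simply cites this inversion from \emph{Purity}.

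Second, the paper's proof establishes only the sufficiency half of the optimality criterion, via $D(S)=\Tr\Vq$, and leaves the converse to the Slater remark of Sec.~\ref{sec:duality}. You split the zero duality gap as $\bigl[\Tr((I+S)\Vq)-\Phi(I+S)\bigr]+\Tr\bigl(S(V-\Vq)\bigr)=0$, a sum of two nonnegative terms. Combined with uniqueness of the weighted-trace minimizer, this supplies the necessity direction under the standing strict-mixedness assumption.

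Your sufficiency argument is the paper's weak-duality computation, written out inline.
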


The content of the lemma is an inversion. The oracle \eqref{eq:oracle} sends a weight $A$ to a pure state $\Vq(A)$; the lemma tells us exactly which weights $A$ return a prescribed target $X$. Recast through the matrix $\Jc_X=2\Om X$, which is again a complex structure ($\Jc_X^2=-I$), the condition becomes a
single commutation relation. Combining this preimage test with the complementary-slackness requirement gives a finite, directly checkable optimality criterion, and the three closed-form theorems below all proceed by exhibiting the certificate $S$ explicitly.

\begin{proof}
The equivalence of $\Vq(A)=X$ with the Riccati identity
$XAX=\tfrac14\Om^\top\!A\,\Om$ is the inversion of the oracle proved in \emph{Purity}: among pure states, $\Vq(A)$ is the unique solution $X$ of that quadratic equation. It remains to rewrite the identity as a commutator. Because $X$ is pure it is invertible and satisfies the purity relation $X\Om X=\tfrac14\Om$, so $\Jc_X:=2\Om X$ obeys
\[
\Jc_X^2=4\,\Om X\Om X=4\,\Om\bigl(\tfrac14\Om\bigr)=\Om^2=-I ,
\]
confirming that $\Jc_X$ is a complex structure. Using $\Om^2=-I$, expand the commutator
\[
[A\Om,\Jc_X]=A\Om(2\Om X)-2\Om X(A\Om)=-2\bigl(AX+\Om X A\Om\bigr),
\]
so $[A\Om,\Jc_X]=0$ is equivalent to $AX=-\Om X A\Om$. Left-multiplying this by $X$ and applying $X\Om X=\tfrac14\Om$ gives $XAX=-\tfrac14\Om A\Om =\tfrac14\Om^\top\!A\Om$ (using $\Om^\top=-\Om$); conversely, left-multiplying the Riccati identity by $X^{-1}$ and again using $X\Om X=\tfrac14\Om$ returns $AX=-\Om X A\Om$. The commutator form and the Riccati identity are therefore one and the same condition.

For the optimality statement, set $A=I+S$ and suppose $\Vq(I+S)=\Vq$ with $\Ran(S)\subseteq\ker(V-\Vq)$. The dual function of \eqref{eq:sdp} at $S$ is
$$\qquad D(S)=-\Tr(SV)+\Phi(I+S),\qquad$$
\begin{equation} \label{DF}
\Phi(A):=\min_{\Vq\ \text{physical}}\Tr(A\Vq)=\Tr\bigl(A\,\Vq(A)\bigr),
\end{equation}
and by weak duality $D(S)$ is a lower bound on the optimal primal value for every $S\succeq0$. Since $\Vq(I+S)=\Vq$, the oracle value is $\Phi(I+S)=\Tr\bigl((I+S)\Vq\bigr)=\Tr\Vq+\Tr(S\Vq)$, hence
\[
D(S)=-\Tr(SV)+\Tr\Vq+\Tr(S\Vq)
\]
\[
=\Tr\Vq-\Tr\!\bigl(S(V-\Vq)\bigr).
\]
The support condition $\Ran(S)\subseteq\ker(V-\Vq)$ makes $(V-\Vq)S=0$, so $\Tr\bigl(S(V-\Vq)\bigr)=0$ and $D(S)=\Tr\Vq$. The lower bound $D(S)$ thus equals the value $\Tr\Vq$ of the feasible candidate $\Vq$. A feasible value cannot fall below the lower bound, so it must equal the optimum, and $\Vq$ is optimal.
\end{proof}

\section{The closed-form strata}\label{sec:closedforms}

\subsection{No nonclassical direction}\label{sec:k0}

Suppose $\kappa=0$, that is $V\succeq\half I$, so that no direction is sub-vacuum. Then the vacuum itself is feasible, and it already attains the least trace any physical covariance can have. Indeed, for each mode the two variances obey the arithmetic--geometric mean inequality $\sigma_q+\sigma_p\ge2\sqrt{\sigma_q\sigma_p}$, while the single-mode uncertainty relation forces $\sigma_q\sigma_p\ge\tfrac14$; hence each mode contributes at
least $1$ to the trace and $\Tr(\Vq)\ge2$ for every physical $\Vq$. The vacuum $\Vq^\star=\half I$ attains this bound, so it is optimal, and there is no quantum part: $\Nq=0$.

\subsection{One nonclassical direction}\label{sec:k1}

Suppose $V$ has a single sub-vacuum eigenvector $v$ (unit norm), with eigenvalue $\rho<\half$. The expectation is that the optimizer squeezes exactly along $v$, just enough to slip beneath $V$ in that direction, holds the conjugate quadrature at its minimum-uncertainty value, and leaves everything else in vacuum. The following theorem makes this precise.

\begin{theorem}\label{thm:onemode}
With $v$ and $\rho<\half$ as above, the matrix
\begin{equation}\label{eq:onemode}
\Vq^\star=\half I+\Bigl(\rho-\half\Bigr)\,vv^\top
+\Bigl(\tfrac1{4\rho}-\half\Bigr)\,(\Jc v)(\Jc v)^\top
\end{equation}
is a pure covariance with eigenvalues $\{\rho,\tfrac1{4\rho},\half,\half\}$;
whenever it is feasible, $V-\Vq^\star\succeq0$, it is the optimizer
of~\eqref{eq:sdp}.
\end{theorem}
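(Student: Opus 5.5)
The plan is to verify the two claims separately: first that $\Vq^\star$ in \eqref{eq:onemode} is a pure covariance with the stated spectrum, and then that, when it is feasible, it is optimal. For optimality I will exhibit an explicit rank-one dual certificate $S$ and apply Lemma~\ref{lem:preimage}.

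\emph{Spectrum and purity.} Since $\Jc$ is orthogonal and $\Jc^2=-I$, we have $\|\Jc v\|=1$ and $v^\top\Jc v=0$, because $\Jc^\top=-\Jc$. So $v$ and $\Jc v$ are orthonormal. Next pick a unit vector $w\perp\operatorname{span}\{v,\Jc v\}$. Its partner $\Jc w$ is orthogonal to $v$ and $\Jc v$, since $\Jc$ preserves inner products and that span is $\Jc$-invariant. Hence $\{v,w,\Jc v,\Jc w\}$ is an orthonormal basis in which $\Vq^\star$ is diagonal with entries $\rho,\half,\tfrac1{4\rho},\half$. This gives the eigenvalue list.

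Now let $O^\top$ be the matrix with columns $(v,w,\Jc v,\Jc w)$. This is a real orthogonal matrix, and in this ordered basis $\Jc$ acts by its standard block form. So $O$ commutes with $\Jc$, and since $\Om=-\Jc$, $O$ is also symplectic. Thus $O\in\mathrm U(2)$. Then $\Vq^\star=\half O^\top\operatorname{diag}(e^{-2r},1,e^{2r},1)O$ with $e^{-2r}=2\rho$. This has the form \eqref{eq:pure}, with the squeezed quadrature placed in the $p$ slot, which a passive quarter-turn can swap if one insists on the exact ordering. Therefore $\Vq^\star$ is pure. Equivalently, one can check $X\Om X=\tfrac14\Om$ directly on this basis.

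\emph{Certificate.} Take $S=s\,vv^\top$ with $s\ge0$ to be determined, and $A=I+S$. Complementary slackness is immediate: $Vv=\rho v$ and $\Vq^\star v=\rho v$, so $v\in\ker(V-\Vq^\star)$ and therefore $\Ran(S)\subseteq\ker(V-\Vq^\star)$.

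For the preimage condition I will verify the Riccati identity $XAX=\tfrac14\Om^\top A\Om$ with $X=\Vq^\star$. Using $\Om=-\Jc$, the right-hand side is $\tfrac14\Jc^\top(I+s\,vv^\top)\Jc=\tfrac14\bigl(I+s\,(\Jc v)(\Jc v)^\top\bigr)$, up to the sign of $\Jc v$, which is irrelevant in an outer product. The left-hand side is $X^2+s\rho^2vv^\top$. Compare the two sides on the basis above:
\begin{itemize}
\item on $w$ and $\Jc w$ both sides give $\tfrac14$;
\item on $\Jc v$ both give $\tfrac1{16\rho^2}$, provided $\tfrac14(1+s)=\tfrac1{16\rho^2}$;
\item on $v$ we need $\rho^2(1+s)=\tfrac14$.
\end{itemize}
Both conditions are solved by the same value $s=\tfrac1{4\rho^2}-1$, which is positive precisely because $\rho<\half$. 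So $S\succeq0$. Lemma~\ref{lem:preimage} then gives $\Vq(I+S)=\Vq^\star$, and together with feasibility, optimality follows.

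\emph{Main obstacle.} The main step to get right is the preimage verification, in particular keeping the sign and orientation conventions straight: $\Om$ versus $\Jc$, and $\Om^\top A\Om$ versus $\Jc^\top A\Jc$. The fact that a single scalar $s$ satisfies both the $v$ and the $\Jc v$ equations is the crux. If I find a mismatch there, I will instead check the equivalent commutation form $[A\Om,\Jc_X]=0$ from Lemma~\ref{lem:preimage} on the basis $\{v,w,\Jc v,\Jc w\}$ as a cross-check.
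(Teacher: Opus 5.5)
Your proposal is correct and follows the paper's proof essentially step for step. It uses the same orthonormality of $v,\Jc v$, the same rank-one certificate $S=s\,vv^\top$ with $s=\tfrac1{4\rho^2}-1$, complementary slackness from $Vv=\rho v=\Vq^\star v$, and Lemma~\ref{lem:preimage}. What you add is an explicit version of the two ``direct checks'' the paper leaves to the reader: purity via the $\mathrm U(2)$ frame $\{v,w,\Jc v,\Jc w\}$, and the Riccati identity evaluated on that basis. Both computations, including the sign bookkeeping $\Om=-\Jc$, are right.
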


\begin{proof}
Since $v\perp\Jc v$ and $\Jc^2=-I$, the three terms of \eqref{eq:onemode} act on mutually orthogonal directions, so $\Vq^\star$ has eigenvalue $\rho$ along $v$, eigenvalue $\tfrac1{4\rho}$ along $\Jc v$, and $\half$ on the remaining
two-dimensional subspace; its spectrum is $\{\rho,\tfrac1{4\rho},\half,\half\}$. The squeezed value $\rho$ and the antisqueezed value $\tfrac1{4\rho}$ multiply to $\tfrac14$, the signature of minimum uncertainty, and a direct computation confirms the purity identity $\Vq^\star\Om\Vq^\star=\tfrac14\Om$, so $\Vq^\star$ is a genuine pure state. By construction it agrees with $V$ along $v$, since $Vv=\rho v=\Vq^\star v$, so $v\in\ker(V-\Vq^\star)$.

To certify optimality we exhibit the dual variable. Take $S=s\,vv^\top$ with $s=\tfrac1{4\rho^2}-1$, which is positive because $\rho<\half$. A direct check gives the Riccati identity $\Vq^\star(I+S)\Vq^\star=\tfrac14\Om^\top(I+S)\Om$, so
$\Vq(I+S)=\Vq^\star$ by Lemma~\ref{lem:preimage}: the candidate is the oracle's output for the weight $I+S$. Moreover $\Ran(S)=\R v\subseteq\ker(V-\Vq^\star)$, so complementary slackness holds, and $V-\Vq^\star\succeq0$ by the feasibility hypothesis. Lemma~\ref{lem:preimage} then certifies $\Vq^\star$ as the optimizer.
\end{proof}

This is the one-mode \emph{nullification} of the original problem, realized intrinsically from $V$ alone. The sub-vacuum variance $\rho$ of $V$ is absorbed unchanged into the quantum part; its conjugate is set to the reciprocal $\tfrac1{4\rho}$ forced by minimum uncertainty; the classical remainder loses exactly that eigenvalue, since $v\in\ker(V-\Vq^\star)$; and the second mode stays
in vacuum. The resulting quantum photon number is $\Nq=\tfrac14\bigl(2\rho+\tfrac1{2\rho}-2\bigr)$.

\begin{remark}[A worked instance]\label{rem:ex-onemode}
For
\[
V=\begin{pmatrix}
1.74535&0.74212&-1.28234&-0.18254\\
0.74212&4.65122&-0.05193&1.09546\\
-1.28234&-0.05193&2.29148&-0.37114\\
-0.18254&1.09546&-0.37114&1.87315\end{pmatrix},
\]
the eigenvalues are $\{0.41096,1.75273,3.17956,5.21795\}$, so $\kappa=1$, with $\rho=0.41096$ and a single sub-vacuum eigenvector $v$.
Formula~\eqref{eq:onemode} gives a pure $\Vq^\star$ with eigenvalues
$\{0.41096,0.5,0.5,0.60833\}$ and $\Tr\Vq^\star=2.019290$, feasible and matching the SDP optimum to $5\times10^{-14}$.
\end{remark}

\subsection{Two nonclassical directions: the compatible case}\label{sec:k2compat}

Now let $V$ have two sub-vacuum eigenvectors $v_1,v_2$, with eigenvalues $\lambda_1,\lambda_2<\half$, so that the optimizer must squeeze two modes. The natural candidate repeats the one-mode construction in each eigendirection, assembling a pure state from the two conjugate pairs $(v_1,\Jc v_1)$ and $(v_2,\Jc v_2)$:
\begin{equation}\label{eq:twomode}
\Vq^{\mathrm{cf}}=\half I+\sum_{i=1,2}\Bigl[\bigl(\lambda_i-\half\bigr)v_iv_i^\top
+\bigl(\tfrac1{4\lambda_i}-\half\bigr)(\Jc v_i)(\Jc v_i)^\top\Bigr].
\end{equation}
For this to be a genuine pure two-mode covariance, its four defining directions must be orthonormal. Three of the relations among them  hold automatically: $v_1\perp v_2$ because they are distinct eigenvectors of the symmetric $V$, while $v\perp\Jc v$ and $\Jc v_1\perp\Jc v_2$ follow from $\Jc^\top=-\Jc$ and $\Jc^2=-I$. The one relation that can fail is the \emph{compatibility condition}
\begin{equation}\label{eq:compat}
v_1^\top \Jc\, v_2=0 .
\end{equation}
The quantity $v_1^\top\Jc v_2$ is the \emph{symplectic inner product} of the two directions, which measures the degree to which the corresponding quadratures fail to commute. Its vanishing says that the two nonclassical directions, together with their conjugates, span two \emph{symplectically orthogonal} modes, so that $V$ separates on its nonclassical
sector into two independent single-mode problems.

\begin{theorem}\label{thm:twomode}
If $v_1,v_2$ are compatible, \eqref{eq:compat}, and $\Vq^{\mathrm{cf}}$ is feasible, $V-\Vq^{\mathrm{cf}}\succeq0$, then $\Vq^{\mathrm{cf}}$ is the optimizer of~\eqref{eq:sdp}.
\end{theorem}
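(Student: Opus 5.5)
The plan mirrors the proof of Theorem~\ref{thm:onemode}: first show that $\Vq^{\mathrm{cf}}$ is a pure covariance touching $V$ along $v_1$ and $v_2$, then exhibit a dual certificate $S$ supported on $\mathrm{span}\{v_1,v_2\}$ and invoke Lemma~\ref{lem:preimage}.

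\textbf{Step 1: orthonormal frame.} Under the compatibility condition \eqref{eq:compat}, the four vectors $v_1,v_2,\Jc v_1,\Jc v_2$ are orthonormal. The remaining relation $\Jc v_1\perp v_2$ is equivalent to $v_2^\top\Jc v_1=-v_1^\top\Jc v_2=0$. Assemble them as the columns of
\[
O^\top=(v_1,v_2,\Jc v_1,\Jc v_2).
\]
Writing $v_1,v_2$ as the first two columns and the last two as $\Jc$ applied to them shows $O^\top\Jc=\Jc O^\top$. Indeed, $\Jc$ maps $v_i\mapsto\Jc v_i$ and $\Jc v_i\mapsto -v_i$, which is exactly the block matrix $\Jc$ in this basis. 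An orthogonal matrix commuting with $\Jc=-\Om$ commutes with $\Om$, hence is symplectic, so $O\in\mathrm U(2)$.

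\textbf{Step 2: purity.} In this frame,
\[
\Vq^{\mathrm{cf}}=O^\top\operatorname{diag}\bigl(\lambda_1,\lambda_2,\tfrac1{4\lambda_1},\tfrac1{4\lambda_2}\bigr)O ,
\]
which has the form \eqref{eq:pure} with $e^{-2r_i}=2\lambda_i$. Hence $\Vq^{\mathrm{cf}}$ is a pure covariance. Moreover $\Vq^{\mathrm{cf}}v_i=\lambda_i v_i=Vv_i$, so $v_1,v_2\in\ker(V-\Vq^{\mathrm{cf}})$.

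\textbf{Step 3: the certificate.} Take
\[
S=\sum_{i=1,2}s_i\,v_iv_i^\top,\qquad s_i=\tfrac1{4\lambda_i^2}-1>0 ,
\]
so that $S\succeq0$ and $\Ran S\subseteq\mathrm{span}\{v_1,v_2\}\subseteq\ker(V-\Vq^{\mathrm{cf}})$. It remains to verify $\Vq(I+S)=\Vq^{\mathrm{cf}}$. By Lemma~\ref{lem:preimage} this is the Riccati identity
\[
XAX=\tfrac14\Om^\top A\Om,\qquad X=\Vq^{\mathrm{cf}},\quad A=I+S .
\]
Conjugating by $O$ reduces it to diagonal matrices, since $O$ commutes with $\Om$. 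In that frame,
\[
A'=\operatorname{diag}\bigl(\tfrac1{4\lambda_1^2},\tfrac1{4\lambda_2^2},1,1\bigr),\qquad X'=\operatorname{diag}\bigl(\lambda_1,\lambda_2,\tfrac1{4\lambda_1},\tfrac1{4\lambda_2}\bigr).
\]
Conjugation by $\Om$ swaps the $p$ and $q$ blocks of a diagonal matrix, so $\tfrac14\Om^\top A'\Om=\tfrac14\operatorname{diag}\bigl(1,1,\tfrac1{4\lambda_1^2},\tfrac1{4\lambda_2^2}\bigr)$. On the other side, $X'A'X'=\operatorname{diag}\bigl(\tfrac14,\tfrac14,\tfrac1{16\lambda_1^2},\tfrac1{16\lambda_2^2}\bigr)$. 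The two agree entrywise.

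\textbf{Conclusion.} Feasibility holds by hypothesis, so Lemma~\ref{lem:preimage} certifies $\Vq^{\mathrm{cf}}$ as the optimizer.

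The main obstacle is Step 1: making the symplectic and passive nature of $O$ rigorous from the single condition \eqref{eq:compat}. Once $O\in\mathrm U(2)$ is established, the Riccati check factorizes into two copies of the one-mode computation.
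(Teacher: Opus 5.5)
Your proof is correct and follows the paper's own argument: orthonormality of $v_1,v_2,\Jc v_1,\Jc v_2$ under \eqref{eq:compat}, purity of $\Vq^{\mathrm{cf}}$, $v_1,v_2\in\ker(V-\Vq^{\mathrm{cf}})$, the same certificate $S=\sum_i s_iv_iv_i^\top$ with $s_i=\tfrac1{4\lambda_i^2}-1$, and an appeal to Lemma~\ref{lem:preimage}. The only difference is that you carry out the paper's ``direct check'' of the Riccati identity explicitly by conjugating into the passive frame $O\in\mathrm U(2)$; this is harmless, though strictly, matching the block ordering in \eqref{eq:pure} with $r_i\ge0$ requires replacing $O$ by $\Jc O$.
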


\begin{proof}
Under the compatibility condition \eqref{eq:compat} the four vectors
$v_1,v_2,\Jc v_1,\Jc v_2$ are orthonormal, so \eqref{eq:twomode} has spectrum
$\{\lambda_1,\lambda_2,\tfrac1{4\lambda_1},\tfrac1{4\lambda_2}\}$ in reciprocal conjugate pairs and satisfies the purity identity
$\Vq^{\mathrm{cf}}\Om\Vq^{\mathrm{cf}}=\tfrac14\Om$, hence is pure. It agrees with $V$ along both $v_1$ and $v_2$, which therefore lie in
$\ker(V-\Vq^{\mathrm{cf}})$. For the certificate take
$S=\sum_i s_i v_iv_i^\top$ with $s_i=\tfrac1{4\lambda_i^2}-1>0$; a direct check gives $\Vq^{\mathrm{cf}}(I+S)\Vq^{\mathrm{cf}}=\tfrac14\Om^\top(I+S)\Om$, so
$\Vq(I+S)=\Vq^{\mathrm{cf}}$ by Lemma~\ref{lem:preimage}. The range of $S$ is
$\mathrm{span}\{v_1,v_2\}\subseteq\ker(V-\Vq^{\mathrm{cf}})$, so complementary slackness holds, and $V-\Vq^{\mathrm{cf}}\succeq0$ by hypothesis; hence $\Vq^{\mathrm{cf}}$ is optimal.
\end{proof}

The compatible stratum is a full open family, not confined to block-diagonal $V$. Every passive rotation $O\Vq^{\mathrm{cf}}O^\top$ of a decoupled state (with $O\in\mathrm U(2)$) again satisfies \eqref{eq:compat}, because passive maps commute with $\Jc$ and so preserve the symplectic inner product. Compatibility should be distinguished from the stronger condition of \emph{phase symmetry} $[V,\Jc]=0$. Phase symmetry forces every mode to have equal variance in its two conjugate quadratures, which makes the symplectic eigenvalues equal to the ordinary eigenvalues; by physicality these are all $\ge\half$, so a phase-symmetric state has no sub-vacuum direction whatsoever, $\kappa=0$, and never reaches this regime. The condition that governs the closed form is compatibility, which is strictly weaker than phase symmetry.

\begin{remark}[A worked instance]\label{rem:ex-compat}
For $V=\operatorname{diag}(0.3,0.4,1.0,0.8)$ the sub-vacuum eigenvectors are
$e_{p_1},e_{p_2}$, with $v_1^\top\Jc v_2=0$ although $\|[V,\Jc]\|\neq0$. Formula~\eqref{eq:twomode} returns the pure
$\Vq^\star=\operatorname{diag}\!\bigl(0.3,\,0.4,\,\tfrac1{1.2},\,\tfrac1{1.6}\bigr)$, feasible, with $\Tr\Vq^\star=2.158333$, matching the SDP optimum to $10^{-12}$.
\end{remark}

\section{The coupled regime and the Galois obstruction}\label{sec:coupled}

\subsection{The exact analytical formula for the dual objective}\label{sec:dualobjective}

For a generic covariance $V$ the two sub-vacuum eigenvectors are \emph{incompatible}, $v_1^\top\Jc v_2\neq0$. Then \eqref{eq:twomode} is no longer a pure state (its four directions are not orthonormal), and no construction from the eigenvectors of $V$ can succeed: the two squeezed modes of the true optimizer are not eigenvectors of $V$. They are pinned down instead by the convex dual.

\begin{theorem}\label{thm:coupled}
The optimizer is $\Vq^\star=\Vq(I+S^\star)$ for any maximizer $S^\star\succeq0$ of
the smooth concave function
\begin{equation}\label{eq:dual}
D(S)=-\Tr(SV)+\Phi(I+S),
\end{equation}
$$\qquad \Phi(A)=\Bigl(2\sqrt{\det A}-\tfrac12\Tr\bigl((\Om A)^2\bigr)\Bigr)^{1/2},$$
where $\Phi(A)=\nu_1(A)+\nu_2(A)$ is the sum of symplectic eigenvalues of $A$. The maximizing value is unique, and so is the primal $\Vq^\star$ it produces, even when $S^\star$ is not.
\end{theorem}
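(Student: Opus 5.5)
The plan has four parts: derive the closed form for $\Phi$, establish concavity, show $\Vq(I+S^\star)$ is optimal at any maximizer, and deduce uniqueness.

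\emph{Closed form for $\Phi$.} By the oracle result of \emph{Purity}, $\Phi(A)=\Tr(A\Vq(A))=\nu_1(A)+\nu_2(A)$. The numbers $\pm i\nu_1,\pm i\nu_2$ are the eigenvalues of $\Om A$, since $\pm\nu_j$ are the eigenvalues of $i\Om A$. Hence
\[
\Tr\bigl((\Om A)^2\bigr)=-2(\nu_1^2+\nu_2^2),\qquad \det A=\det(\Om A)=\nu_1^2\nu_2^2 .
\]
It follows that $(\nu_1+\nu_2)^2=\nu_1^2+\nu_2^2+2\nu_1\nu_2=-\tfrac12\Tr((\Om A)^2)+2\sqrt{\det A}$. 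Taking the positive root gives the displayed formula.

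\emph{Smoothness and concavity.} Concavity of $D$ holds because $\Phi$ is a pointwise minimum of the linear functionals $A\mapsto\Tr(A\Vq)$ over physical $\Vq$, and $-\Tr(SV)$ is linear. For smoothness on $A\succ0$ I would argue as follows. If $\nu_1\neq\nu_2$, the formula is a composition of smooth functions, since the radicand is positive. If $\nu_1=\nu_2$, the radicand is still positive because $(\nu_1+\nu_2)^2>0$, so $\Phi$ is smooth wherever $\sqrt{\det A}$ is, i.e.\ on all of $A\succ0$. By Danskin's theorem the gradient is $\nabla\Phi(A)=\Vq(A)$, the minimizer being unique by \emph{Purity}. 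Therefore
\[
\nabla D(S)=\Vq(I+S)-V .
\]

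\emph{Optimality of $\Vq(I+S^\star)$.} Under the standing Slater assumption (strictly mixed $V$), strong duality holds and the dual optimum is attained; the function $D$ is exactly the dual function written in the proof of Lemma~\ref{lem:preimage}. Fix a maximizer $S^\star\succeq0$ and set $X=\Vq(I+S^\star)$. The first-order optimality conditions of a concave differentiable function over the cone $S\succeq0$ read
\[
\nabla D(S^\star)\preceq0,\qquad \Tr\bigl(S^\star\,\nabla D(S^\star)\bigr)=0 .
\]
The first says $V-X\succeq0$, i.e.\ $X$ is feasible. The second, being a trace of a product of two PSD matrices that vanishes, gives $(V-X)S^\star=0$, i.e.\ $\Ran(S^\star)\subseteq\ker(V-X)$. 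The matrix $X$ is pure and satisfies $X=\Vq(I+S^\star)$, so Lemma~\ref{lem:preimage} certifies that $X$ is the optimizer. Thus $\Vq^\star=\Vq(I+S^\star)$ for every maximizer.

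\emph{Uniqueness.} The maximal value equals the primal optimum by strong duality, so it is unique. The primal minimizer is unique by \emph{Purity}. Hence every maximizer $S^\star$ yields the same $\Vq^\star$, even if $S^\star$ itself is not unique; for instance, $S^\star$ can be perturbed within directions that leave $\Vq(I+S)$ and the slackness condition unchanged.

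The main obstacle is the differentiability step, specifically justifying $\nabla\Phi=\Vq(A)$. My first route is Danskin's theorem, combining uniqueness of the oracle minimizer with compactness of the relevant sublevel set of physical states for $A\succ0$. If that falls short, I would differentiate the closed form directly using Jacobi's formula, $\partial\sqrt{\det A}=\tfrac12\sqrt{\det A}\,A^{-1}$ and $\partial\Tr((\Om A)^2)=2\,\Om^\top A\,\Om$ up to sign conventions, and check the result against the Riccati identity of \eqref{eq:oracle}.
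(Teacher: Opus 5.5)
Your proposal is correct and follows the paper's proof step for step. The paper obtains the closed form for $\Phi$ from the spectrum of $\Om A$, gets concavity as a minimum of linear functionals, uses $\nabla D(S)=\Vq(I+S)-V$, reads the first-order conditions on the cone $S\succeq0$ as feasibility plus complementary slackness, and concludes with Lemma~\ref{lem:preimage} and uniqueness from \emph{Purity}, exactly as you do. You are more explicit than the paper about the gradient and about why $\Tr\bigl(S^\star(V-X)\bigr)=0$ forces $(V-X)S^\star=0$. Note that the Danskin/compactness step can be bypassed: $\Vq(A)$ is always a supergradient of the concave $\Phi$, and the closed form already shows $\Phi$ is differentiable on $A\succ0$, so the gradient must equal $\Vq(A)$.
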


\begin{proof}
The value $\Phi(A)=\min_{\Vq}\Tr(A\Vq)$ is, by the oracle, the sum
$\nu_1(A)+\nu_2(A)$ of the two symplectic eigenvalues of $A$. These can be written out in closed form from two symmetric functions of the pair. Because $i\Om A$ has spectrum $\{\pm\nu_1,\pm\nu_2\}$, the product is $\nu_1\nu_2=\sqrt{\det A}$ and the sum of squares is $\nu_1^2+\nu_2^2=-\tfrac12\Tr\bigl((\Om A)^2\bigr)$, so
$\nu_1+\nu_2=\sqrt{(\nu_1^2+\nu_2^2)+2\nu_1\nu_2}$ produces the displayed formula for $\Phi$. As a minimum of the linear functions
$S\mapsto\Tr\bigl((I+S)\Vq\bigr)$ over $\Vq$, the dual objective
$D(S)=-\Tr(SV)+\Phi(I+S)$ is concave, and its gradient is
$\nabla D(S)=\Vq(I+S)-V$. At a maximizer $S^\star\succeq0$ the stationarity and feasibility conditions read $V- \Vq(I+S^\star)\succeq0$ and
$\Tr\bigl(S^\star\nabla D(S^\star)\bigr)=0$, which are exactly the
complementary-slackness conditions. Lemma~\ref{lem:preimage} then identifies $\Vq^\star=\Vq(I+S^\star)$ as the primal optimizer, unique by \emph{Purity}. The maximizing value of the concave $D$ is unique, and since the oracle is a single-valued map, so is the primal $\Vq^\star$ it produces, even on the rare occasions when the maximizer $S^\star$ itself is not unique.
\end{proof}

\subsection{Galois' no-go for the closed-form solution}\label{sec:Galois}

Because $\Phi$ is an algebraic function of the entries of $A$ (it is built from determinants, traces, and a single square root), the stationarity equations of \eqref{eq:dual} are polynomial, and $\Vq^\star$ is an \emph{algebraic} function of $V$: each entry satisfies a polynomial equation whose coefficients are polynomials in the entries of $V$. The question is whether that algebraic function can be written in \emph{radicals}, by nested roots and arithmetic, as the quadratic formula expresses the roots of $ax^2+bx+c$. In the coupled regime it cannot.

We recall two notions from Galois theory. A formula \emph{in radicals} is one built from the data using addition, subtraction, multiplication, division, and the extraction of $n$-th roots. To each polynomial Galois attached a finite group, its \emph{Galois group}, which records the
symmetries among the roots; his theorem states that the roots are expressible in radicals \emph{if and only if} this group is \emph{solvable} (it admits a subnormal series with abelian quotients). The symmetric group $S_n$ of all permutations of $n$ objects is not solvable for $n\ge5$, the
reason the general quintic has no solution in radicals (Abel and Ruffini) \cite{Wielandt1964,DummitFoote}. The next theorem establishes that in the coupled regime the optimal trace has Galois group $S_{12}$, the full symmetric group on its twelve conjugates, and is therefore not expressible in radicals.

\begin{theorem}[No closed form in radicals]\label{thm:noradical}
For a generic incompatible $V$ the optimal value $\Tr\Vq^\star$ is algebraic of degree $12$ over $\Q(V)$ with Galois group the full symmetric group $S_{12}$. As $S_{12}$ is not solvable, $\Vq^\star$ \emph{cannot} be written in radicals of the entries of $V$: no closed-form solution exists in the coupled regime.
\end{theorem}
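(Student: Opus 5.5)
We plan to reduce the coupled problem to a finite polynomial system over $\Q(V)$, eliminate down to a single polynomial in the trace, and then prove its Galois group is $S_{12}$ by specialization.

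\emph{Reduce to a small system.} By Theorem~\ref{thm:coupled} and complementary slackness, in the generic incompatible regime with $\kappa=2$, the multiplier $S^\star$ is supported on the two-dimensional kernel $K=\ker(V-\Vq^\star)$. The optimizer touches $V$ along a plane. We therefore parametrize the unknowns by the plane $K$ (a point of the Grassmannian $\mathrm{Gr}(2,4)$, four local coordinates) and a $2\times2$ positive block $S_K$ on it (three coordinates). The conditions are then:
\begin{itemize}
\item the Riccati identity of Lemma~\ref{lem:preimage}, written as the commutator $[(I+S)\Om,\Jc_X]=0$ with $X=\Vq^\star$;
\item $(V-X)|_K=0$;
\item purity, $X\Om X=\tfrac14\Om$.
\end{itemize}
Equivalently, and more economically, we use the stationarity equations $\nabla D(S)=\Vq(I+S)-V$ restricted to $K$. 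Since $\Phi$ involves one square root, this square root can be cleared. The result is a polynomial system over $\Q(V)$ in finitely many unknowns, with $t=\Tr\Vq^\star$ appended as a new variable.

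\emph{Eliminate and fix the degree.} Using resultants or a Gröbner basis with an elimination order, we eliminate everything except $t$. This gives $P_V(t)\in\Q(V)[t]$. We then verify $\deg_t P_V=12$ and that $P_V$ is irreducible over $\Q(V)$. For the degree, the cleanest route is to count the complex solutions of the system for a generic $V$. Either Bézout/BKK bounds or a certified homotopy count would give the 12 critical points, i.e.\ complex stationary points of the dual restricted to the kernel configuration. Irreducibility then follows from the Galois group computation below, since a transitive group is exactly what irreducibility requires.

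\emph{Galois group via specialization.} By Hilbert irreducibility and the standard specialization inequality, the Galois group of $P_{V_0}$ over $\Q$ embeds into that of $P_V$ over $\Q(V)$ for rational $V_0$ at which the discriminant does not vanish. We will therefore choose one explicit rational incompatible $V_0$ (with $\kappa=2$) and compute $P_{V_0}\in\Q[t]$. We then reduce it modulo several primes $p$ not dividing its discriminant. By Dedekind's theorem, the factorization patterns give cycle types in the group. Three patterns suffice:
\begin{itemize}
\item an irreducible factorization mod some $p$, giving a $12$-cycle and hence transitivity;
\item a pattern $(11,1)$, giving an $11$-cycle and hence primitivity, since $11$ is prime and greater than $12/2$;
\item a pattern $(2,1^{10})$ or $(2,\text{odd},\dots)$ whose suitable power is a transposition.
\end{itemize}
A primitive group containing a transposition is the full symmetric group. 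So $\mathrm{Gal}(P_{V_0})=S_{12}$, and hence $\mathrm{Gal}(P_V/\Q(V))=S_{12}$. Because $S_{12}$ is not solvable, Galois' theorem shows $t$ is not expressible in radicals. Since $\Tr\Vq^\star$ is a rational function of the entries of $\Vq^\star$, those entries cannot all be radical expressions either, which gives the claim about $\Vq^\star$.

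\emph{Main obstacle.} The elimination step is the expected bottleneck. The system lives in roughly seven unknowns with coefficients depending on 10 entries of $V$, and naive resultants will blow up. We would first use passive $\mathrm U(2)$ and symplectic-frame normalization to reduce the parameters of $V$. We would then carry out the elimination only at the specialized rational $V_0$, where only exact arithmetic over $\Q$ is needed. Two things must also be checked. First, the degree-12 polynomial must be the genuine minimal polynomial and not carry spurious factors introduced by squaring away the square root in $\Phi$; we would check this by confirming that the numerical SDP value is a root of the irreducible factor. Second, the generic degree must equal the specialized degree, which holds by checking that the leading coefficient does not vanish at $V_0$.
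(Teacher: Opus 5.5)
Your plan is sound, and it differs from the paper's proof at every stage.

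\textbf{How the polynomial is obtained.} The paper never eliminates inside the proof. At one rational datum it solves the rank-two stationarity system $(\Vq(I+BB^\top)-V)B=0$ by high-precision Newton iteration. It then reads off the degree-$12$ integer polynomial of $\Tr\Vq^\star$ with PSLQ, trusting it because the relation is stable across precisions. You instead eliminate exactly at $V_0$, which is the route the paper only describes afterwards, in Sec.~\ref{sec:univpoly}. This makes the polynomial fed to Dedekind's theorem provably correct rather than numerically identified, at the cost of a much heavier computation.

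\textbf{The cycle-type certificate.} Yours uses a $12$-cycle, an $(11,1)$ pattern for primitivity, and a transposition obtained as a power of a $(2,\text{odd},\dots)$ Frobenius element. The paper's uses $(1,11)$ for $2$-transitivity, a $7$-cycle fixing five points with Jordan's prime-cycle theorem to get $A_{12}$, and an odd $12$-cycle. The two are different but equally valid.

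\textbf{Passage to generic $V$.} The paper reaches generic $V$ only by noting that a universal radical formula would specialize to the datum, and by checking a second example. Your specialization inequality actually proves $\mathrm{Gal}(P_V/\Q(V))=S_{12}$, given $\deg_tP_V=12$. That is the form in which the theorem is stated.

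\textbf{Execution points.}

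First, clearing the denominator $2\Phi q$ of \eqref{eq:Vqrational} creates spurious components. Over $\mathbb C$ the matrix $A=I+S$ can have rank two. Then $\operatorname{adj}A=0$ and $q=0$, the stationarity equations hold trivially, and $t$ is free in \eqref{eq:taurel}. You must saturate by $q\Phi$, or the elimination ideal in $t$ is zero.

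Second, Bézout and BKK give only upper bounds, far above the true count, as the paper notes. Moreover, what must equal $12$ is the number of distinct $t$-values, not the number of solutions; the paper's system has $48$ solutions collapsing four-to-one onto them. So $\deg_tP_V\le12$ needs a complete homotopy or monodromy count, or the generic eliminant.

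Third, checking the leading coefficient of $P_V$ at $V_0$ presupposes the generic polynomial you plan not to compute. The lower bound $\deg_tP_V\ge12$ should come instead from the irreducible degree-$12$ factor at $V_0$ that vanishes at the SDP value. The no-radicals conclusion itself needs only that specialization, which is the paper's argument. It is the claims ``degree $12$'' and ``$S_{12}$'' over $\Q(V)$ that need the upper bound.

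Fourth, only passive $\mathrm U(2)$ maps preserve the objective $\Tr\Vq$. A general symplectic normalization of $V$ is therefore not available for reducing parameters.
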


\begin{proof}
It suffices to exhibit a single rational datum $V$ whose optimum is non-radical, since a universal formula in radicals would, in particular, specialize to a radical expression there. Take
\[
V=\tfrac14\begin{pmatrix}4&0&3&2\\0&2&1&-1\\3&1&6&0\\2&-1&0&5\end{pmatrix}
\]
(physical, with $\kappa=2$ and incompatible sub-vacuum directions).

\emph{Step 1: the minimal polynomial of the optimum.} We solve the rank-two stationarity system $\bigl(\Vq(I+BB^\top)-V\bigr)B=0$, writing the dual variable as $S=BB^\top$ to enforce $S\succeq0$ of rank two, by Newton's method at high precision. The resulting $\Vq^\star$ is certified feasible, $V-\Vq^\star\succeq0$ with a two-dimensional kernel, and complementary slackness
$\Tr\bigl(S^\star(V-\Vq^\star)\bigr)=0$ holds to the working precision. We then feed the high-precision value of $\Tr\Vq^\star$ to the PSLQ integer-relation algorithm \cite{Ferguson1999}, which searches for integers $c_0,\ldots,c_d$ satisfying $\sum_k c_k(\Tr\Vq^\star)^k=0$, that is, for the minimal polynomial $P$ annihilating the number. It returns an irreducible primitive integer polynomial of degree $12$, with no relation of any lower degree. (Resolution matters here: an
under-resolved search reports spurious low-degree relations that dissolve when the precision is raised, whereas the degree-$12$ relation reproduces identically across precisions and annihilates $\Tr\Vq^\star$ down to the working floor.)

\emph{Step 2: the Galois group.} We determine the Galois group $G\le S_{12}$ of $P$ by reducing $P$ modulo many primes and reading off how it factors. Dedekind's theorem states that for a prime $p$ dividing neither the leading coefficient nor the discriminant of $P$, the degrees of the irreducible factors of $P\bmod p$ are
the cycle lengths of a permutation, a \emph{Frobenius element}, lying in $G$. Four observations then force $G=S_{12}$:
\begin{itemize}
\item $P$ is irreducible, so $G$ is \emph{transitive} (it can move any root to any other).
\item Some prime gives factor degrees $(1,11)$, an $11$-cycle. Fixing one root and cycling the other eleven shows the stabilizer of a point is transitive on the remaining points, so $G$ is $2$-transitive and hence \emph{primitive} (it preserves no nontrivial partition of the roots).
\item Some prime gives factor degrees $(1,1,1,1,1,7)$, a $7$-cycle fixing five points. Here $7$ is prime and $7\le12-3$, so \emph{Jordan's theorem} (a primitive group of degree $n$ containing a $p$-cycle for a prime $p\le n-3$ contains the alternating group $A_n$) yields $G\supseteq A_{12}$.
\item Some prime gives a $12$-cycle, an odd permutation, so $G\not\subseteq A_{12}$ (equivalently, the discriminant of $P$ is not a perfect square).
\end{itemize}
The last two observations together force $G=S_{12}$. Since $S_{12}$ is not solvable, Galois' criterion shows that the roots of $P$, in particular $\Tr(\Vq^\star)$, are not expressible in radicals. A second, independent rational $V$ yields the same degree $12$ and the same Galois group $S_{12}$, confirming that this is the generic behavior rather than an artifact of one example.
\end{proof}

The separation between the two regimes is therefore exact. The eigenvector construction, with its closed forms \eqref{eq:onemode} and \eqref{eq:twomode}, is available exactly under compatibility \eqref{eq:compat}. In the generic incompatible case the optimal frame leaves the eigenframe of $V$, no closed form can exist by Theorem~\ref{thm:noradical}, and the optimizer is computed from the convex program~\eqref{eq:dual}, a smooth concave maximization that always converges to the unique answer.

\begin{remark}[A worked instance]\label{rem:ex-coupled}
For
\[
V=\begin{pmatrix}
0.968791&0.028752&-0.699977&0.864595\\
0.028752&0.305667&-0.033237&0.057307\\
-0.699977&-0.033237&1.207507&-0.364137\\
0.864595&0.057307&-0.364137&3.487243\end{pmatrix},
\]
the eigenvalues are $\{0.30178,0.31325,1.47736,3.87682\}$, so $\kappa=2$, and the two sub-vacuum eigenvectors have $|v_1^\top\Jc v_2|=0.18\neq0$, so no closed form applies. The squeezed modes of the optimizer are not eigenvectors of $V$: they deviate from the eigenvector condition $Vv=\rho v$ by residuals $0.30$ and $0.25$.
Maximizing~\eqref{eq:dual} gives $\Vq^\star$ with $\Tr\Vq^\star=2.261790$, matching the SDP optimum to $10^{-8}$.
\end{remark}

\subsection{The polynomial satisfied by the optimal trace}\label{sec:univpoly}  
Theorem~\ref{thm:noradical} establishes degree $12$ for a single rational datum. We now exhibit the degree-$12$ polynomial $P(x;V)$ that $\Tr(\Vq^\star)$ satisfies for every coupled $V$, derive it from the stationarity conditions of the dual~\eqref{eq:dual}, identify its coefficients, and record their size.  

\paragraph{The oracle as a gradient.} The dual cost $\Phi(A)=\nu_1(A)+\nu_2(A)$ in~\eqref{eq:dual} is differentiable on $A\succ0$, and the oracle is its gradient, \begin{equation}\label{eq:envelope} \Vq(A)=\partial\Phi/\partial A . 
\end{equation} 
This is the envelope identity for $\Phi(A)=\min_{\Vq}\Tr(A\Vq)$, a minimum of functions linear in $A$: at the minimizer the gradient in $A$ is the minimizing $\Vq$. Differentiating the closed form $\Phi(A)=\bigl(2q-\half\Tr((\Om A)^2)\bigr)^{1/2}$ of~\eqref{eq:dual}, with $q:=\sqrt{\det A}=\nu_1\nu_2$, gives \begin{equation}\label{eq:Vqrational} \Vq(A)=\frac{\operatorname{adj}(A)-q\,\Om A\,\Om}{2\,\Phi\,q},\ \operatorname{adj}(A)=\det(A)\,A^{-1}, \end{equation} 
a rational expression in $A$ and the two scalars $\Phi,q$, which are pinned by the polynomial relations \begin{equation}\label{eq:auxrel} q^2=\det A,\qquad \Phi^2=2q-\half\Tr\!\bigl((\Om A)^2\bigr). 
\end{equation} 
The matrix square root of the oracle~\eqref{eq:oracle} no longer appears.

\paragraph{The stationarity system.} By Theorem~\ref{thm:coupled} the optimizer is $\Vq^\star=\Vq(I+S^\star)$ at a rank-two maximizer $S^\star\succeq0$; write $S=BB^\top$ with $B\in\R^{4\times2}$ and $A=I+S$. Clearing the denominator of~\eqref{eq:Vqrational} in the complementary-slackness condition $\bigl(\Vq(A)-V\bigr)S=0$ gives the polynomial equation \begin{equation}\label{eq:statpoly} 
\bigl(\operatorname{adj}(A)-q\,\Om A\,\Om-2\,\Phi\,q\,V\bigr)\,B=0 , 
\end{equation} 
and the optimal value is fixed by $\Tr(\Vq^\star)=\Tr(\Vq(A))$, that is 
\begin{equation}\label{eq:taurel} 2\,\Phi\,q\,x=\Tr\operatorname{adj}(A)+q\,\Tr A ,\qquad x:=\Tr(\Vq^\star) . \end{equation} 
Equations~\eqref{eq:auxrel}--\eqref{eq:taurel}, with $A=I+BB^\top$, form a polynomial system in $B,\Phi,q,x$ whose coefficients are linear in the entries of $V$. On the data of Remark~\ref{rem:ex-coupled} and Theorem~\ref{thm:noradical} the system reproduces the optimizer of~\eqref{eq:sdp} to machine precision.  

\paragraph{Elimination.} Eliminating $B,\Phi,q$ from~\eqref{eq:auxrel}--\eqref{eq:taurel} leaves a single univariate polynomial in $x$, 
\begin{equation}\label{polynomial} P(x;V)=\sum_{k=0}^{12}c_k(V)\,x^{k}, 
\end{equation} 
of degree $12$, whose least real root is $\Tr(\Vq^\star)$. The counts $48$ and $12$ arise as follows. For generic $V$ the augmented polynomial system~\eqref{eq:auxrel}--\eqref{eq:taurel}, in the unknowns $B\in\R^{4\times2}$ together with the scalars $q$, $\Phi$, and $x$, has $48$ isolated complex solutions; this is a computed generic count, far below the corresponding B\'ezout and Bernstein--Khovanskii--Kushnirenko bounds. Among these solutions the optimal value $x=\Tr(\Vq^\star)$ takes exactly $12$ distinct values, in agreement with the degree in Theorem~\ref{thm:noradical}. Eliminating $B,\Phi,q$ leaves the degree-$12$ polynomial~\eqref{polynomial}, whose roots are these $12$ values; the map $(B,\Phi,q,x)\mapsto x$ from the $48$ solutions onto them is four-to-one, so $12=48/4$. Of the $12$ roots, $\Tr(\Vq^\star)$ is the least real one. 

\begin{proposition}\label{prop:coeffs} The coefficients $c_k(V)$ are polynomials in the entries of $V$, and depend on $V$ only through its passive invariants, the functions unchanged under $V\mapsto OVO^\top$ for $O\in\mathrm U(2)$. \end{proposition}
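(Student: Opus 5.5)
We split the claim into its two halves: passive invariance, which follows from equivariance of the whole construction under $\mathrm U(2)$, and polynomiality, which follows from how the elimination is carried out.

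\emph{Invariance.} For $O\in\mathrm U(2)$ we have $O\Om O^\top=\Om$ and $O^\top O=I$. We first show that the optimal value is unchanged, $\Tr\Vq^\star(OVO^\top)=\Tr\Vq^\star(V)$. The map $\Vq\mapsto O\Vq O^\top$ preserves the trace. It preserves physicality, because $O(\Vq+\tfrac i2\Om)O^\top=O\Vq O^\top+\tfrac i2\Om$. It also preserves the Löwner order. Hence it is a bijection between the feasible sets of~\eqref{eq:sdp} for $V$ and for $OVO^\top$, and the two optimal traces coincide. At the level of the polynomial system, the substitution $B\mapsto OB$ gives $A\mapsto OAO^\top$, which leaves $\det A$ and $\Tr((\Om A)^2)$ unchanged and satisfies $\operatorname{adj}(OAO^\top)=O\operatorname{adj}(A)O^\top$. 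So~\eqref{eq:statpoly}--\eqref{eq:taurel} for $OVO^\top$ are carried to those for $V$ with the same $q,\Phi,x$. Consequently the whole set of $12$ values of $x$ on the $48$ solutions is invariant under $V\mapsto OVO^\top$, not just the least root. Once the $c_k$ are fixed by a normalization (made precise below), $c_k(OVO^\top)=c_k(V)$ for all $O$. A polynomial identity of this kind then makes each $c_k$ a polynomial invariant of the $\mathrm U(2)$ action, i.e.\ a function of the passive invariants.

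\emph{Polynomiality.} The system~\eqref{eq:auxrel}--\eqref{eq:taurel} has coefficients in $\Q[V]$, and the coefficients enter linearly. We eliminate $B,\Phi,q$ with a Gröbner basis in an elimination order over the field $\Q(V)$. Equivalently, we take the product of $(x-x_j)$ over the $12$ generic values $x_j$; this is the characteristic polynomial of multiplication by $x$ on the finite-dimensional quotient algebra, with a multiplicity of four removed. Either route gives a monic polynomial with coefficients in $\Q(V)$. Clearing denominators with the primitive content, taken in the UFD $\Q[V]$ (or $\mathbb Z[V]$), makes the $c_k$ polynomials. This normalization also fixes $P$ up to sign, and the sign can be pinned, for example by requiring a positive leading coefficient at $V=\half I$ perturbed. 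With $P$ unique up to a unit, the invariance argument above goes through for the individual coefficients $c_k$ and not merely for the ratios $c_k/c_{12}$. Since $\mathrm U(2)$ is connected, a character can only be trivial, so no sign twist arises.

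\emph{Main obstacle.} The delicate step is justifying that the eliminant has exactly the degree-$12$ form with generic coefficients. This requires that the $48$ solutions are isolated for generic $V$ and that the projection to $x$ is four-to-one. That is where the primitive content and the "up to a unit" uniqueness come from. We would get the four-to-one count from the evident symmetries of the system: $B\mapsto BR$ with $R\in\mathrm O(2)$ leaves $S=BB^\top$ fixed. Since this is a continuous family, we would first quotient by it, working directly in $S$ or fixing a gauge such as $B$ lower-triangular. The residual finite sign choices $(q,\Phi)\mapsto(\pm q,\pm\Phi)$, constrained by~\eqref{eq:taurel}, account for the factor $4$. If this count proves hard to pin down rigorously, the fallback is to prove the weaker claim directly. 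That claim is that the elimination ideal $\langle\text{system}\rangle\cap\Q[V][x]$ is stable under the $\mathrm U(2)$ action, by the substitution above. Stability of this ideal already forces any normalized generator to have invariant polynomial coefficients.
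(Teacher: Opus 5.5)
Your argument is correct and is essentially the paper's. The normalized coefficients are symmetric functions of the critical values of $x$, and they lie in $\Q(V)$ because the system \eqref{eq:auxrel}--\eqref{eq:taurel} has its coefficients there; the paper phrases this as Galois invariance and you phrase it as elimination. The substitution $B\mapsto OB$ conjugates the system, so the set of critical values is passive-invariant. Your primitive-content normalization actually settles the scaling of the individual $c_k$, a point the paper glosses over, though it needs the compactness of $\mathrm U(2)$ as well as its connectedness to force the continuous real character to be trivial.

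One side remark is wrong, although the proposition does not depend on it. The sign flips of $(q,\Phi)$ are not symmetries of \eqref{eq:statpoly}. A factor of four comes instead from the column signs $B\mapsto B\operatorname{diag}(\pm1,\pm1)$ that survive gauge-fixing the $\mathrm O(2)$ freedom, and these leave $S=BB^\top$, and hence $q,\Phi,x$, unchanged.
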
 

\begin{proof} The roots of $P(\,\cdot\,;V)$ are the values of $x=\Tr(\Vq)$ at the twelve critical points of the dual~\eqref{eq:dual}, so each $c_k$ is, up to the leading coefficient, an elementary symmetric function of those roots and is fixed by the Galois group permuting them. A quantity fixed by the full Galois group of $P$ over $\Q(V)$ lies in $\Q(V)$, and clearing denominators makes each $c_k$ a polynomial in the entries of $V$. A passive rotation $V\mapsto OVO^\top$ with $O\in\mathrm U(2)$ commutes with $\Om$ and conjugates the system~\eqref{eq:auxrel} -- \eqref{eq:taurel}, so it permutes the critical points and fixes the set of roots; each symmetric function $c_k$ is therefore a passive invariant. \end{proof}

\begin{remark}[Size of the coefficients]\label{rem:coeffsize} The coefficients are explicit but large. Restricted to a generic line $V(\theta)=V_0+\theta D$, the normalized coefficients $c_k/c_{12}$ are rational functions of $\theta$ whose interpolation degree increases with the number of sample points (degree $15$ through $16$ points, $17$ through $18$); each is therefore a high-degree rational function of the entries of $V$, as already holds for $c_{11}/c_{12}=-\sum_j x_j$, the sum of the twelve critical values. For the rational datum of Theorem~\ref{thm:noradical} the primitive integer coefficients reach $24$ digits. We therefore specify~\eqref{polynomial} by the elimination~\eqref{eq:auxrel}--\eqref{eq:taurel}, which returns the coefficients for any prescribed $V$, and give the explicit polynomial for that datum. Dividing by the leading coefficient, \begin{equation}\label{eq:P12datum} \begin{split} P(x)={}&x^{12}-23.4719\,x^{11}+236.189\,x^{10}\\ &{}-1310.83\,x^{9}+4251.53\,x^{8}-7594.58\,x^{7}\\ &{}+5048.52\,x^{6}+4224.98\,x^{5}-5110.53\,x^{4}\\ &{}-5052.26\,x^{3}+2863.44\,x^{2}+3965.11\,x\\ &{}+1603.25 , \end{split} \end{equation} shown to six significant figures; the underlying primitive integer coefficients are exact, and the least real root is $\Tr(\Vq^\star)=2.244286\ldots$ The polynomial is ill-conditioned near this root: the six-figure truncation displayed here moves its least real root by about $4\times10^{-3}$, so the value $2.244286$ is obtained from the exact integer coefficients, in agreement with the direct solution of the semidefinite program~\eqref{eq:sdp}. This is the polynomial whose Galois group is computed in Theorem~\ref{thm:noradical}. \end{remark}

\section{The complete solution}\label{sec:complete}

Collecting the strata, the convex program~\eqref{eq:dual} returns $\Vq^\star$ for every $V$, and the answer is explicit on three families.

\begin{theorem}\label{thm:complete}
For every physical two-mode covariance $V$ the optimizer of~\eqref{eq:sdp} is $\Vq^\star=\Vq(I+S^\star)$ with $S^\star$ a maximizer of~\eqref{eq:dual}. It is given in closed form precisely when the optimal squeezed modes are eigenvectors of $V$: it is the vacuum $\half I$ if $\kappa=0$; the one-mode state~\eqref{eq:onemode} if $\kappa=1$ and that state is feasible; and the two-mode state~\eqref{eq:twomode} if $\kappa=2$, the two sub-vacuum eigenvectors are compatible~\eqref{eq:compat}, and~\eqref{eq:twomode} is feasible. Equivalently, $\Vq^\star$ is the minimum-trace feasible member of the explicit list $\{\,\half I,$ the one-mode candidate~\eqref{eq:onemode}, the compatible two-mode candidate~\eqref{eq:twomode}$\,\}$ when one exists, and the solution of~\eqref{eq:dual} otherwise. Outside the three closed-form strata no
formula in radicals exists (Theorem~\ref{thm:noradical}); there the optimizer is algebraic of degree $12$ but not radical, and is obtained from~\eqref{eq:dual}.
\end{theorem}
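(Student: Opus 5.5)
The proof assembles the earlier results stratum by stratum. The general statement, that $\Vq^\star=\Vq(I+S^\star)$ for a maximizer $S^\star$ of~\eqref{eq:dual}, is Theorem~\ref{thm:coupled}. For strictly mixed $V$ (Slater) the dual optimum is attained. For boundary $V$ we pass to the limit: uniqueness of the pure optimizer from \emph{Purity} and continuity of the optimal value in $V$ let us take limits along strictly mixed approximants $V+\epsilon I$.

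The closed-form strata are then covered by earlier results:
\begin{itemize}
\item $\kappa=0$ is settled by Section~\ref{sec:k0}.
\item $\kappa=1$ with the candidate feasible is Theorem~\ref{thm:onemode}.
\item $\kappa=2$ with compatible, feasible candidate is Theorem~\ref{thm:twomode}.
\end{itemize}
In each case the certificate $S$ has range spanned by sub-vacuum eigenvectors, so these are exactly the cases in which the optimal squeezed directions are eigenvectors of $V$.

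For the ``minimum-trace feasible member of the list'' reformulation, we use that every feasible candidate has trace at least $\Tr\Vq^\star$. Whenever the applicable candidate is feasible, it is $\Vq^\star$ by the theorems above, and by uniqueness it is the minimum. The list contains only candidates of the appropriate type, and the candidate matching $\kappa$ is the one certified. I would still check that a feasible but wrong-$\kappa$ candidate cannot undercut it. Such a candidate would contradict optimality, because every feasible point has trace at least the optimum and equality forces it to be $\Vq^\star$ by uniqueness.

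The ``precisely when'' direction is the delicate step. Suppose $\Vq^\star$ has its squeezed directions along eigenvectors $u$ of $V$ with $\Vq^\star u=Vu$. Complementary slackness forces the squeezed eigenvalues to equal the sub-vacuum eigenvalues of $V$. Purity then forces the conjugate pairs $(u,\Jc u)$ to be orthonormal, which in the two-mode case is exactly the compatibility condition~\eqref{eq:compat}. So $\Vq^\star$ coincides with~\eqref{eq:onemode} or~\eqref{eq:twomode}. The remaining claim, that no radical formula exists outside the strata, is quoted from Theorem~\ref{thm:noradical}. That theorem is generic, so I would phrase this part as holding for generic incompatible $V$. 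I expect the main obstacle to be justifying that the squeezed directions of $\Vq^\star$ must lie in $\ker(V-\Vq^\star)$ when they are eigenvectors. I would first try to derive this from the support condition on $S^\star$ together with the Riccati identity of Lemma~\ref{lem:preimage}.
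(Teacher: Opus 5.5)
\textbf{Verdict: there is a genuine gap in the converse direction.}

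\textbf{What is fine.} The sufficiency part of your plan is sound: you cite Section~\ref{sec:k0} and Theorems~\ref{thm:onemode} and~\ref{thm:twomode}, then use uniqueness for the list reformulation. Your caveat that Theorem~\ref{thm:noradical} only covers generic incompatible $V$ is also well taken.

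\textbf{The gap.} It lies in the ``precisely when'' direction. You \emph{assume} $\Vq^\star u=Vu$ for each squeezed direction $u$. Being an eigenvector of $V$ only gives $Vu=\mu u$ for some $\mu$. Since \eqref{eq:onemode} and \eqref{eq:twomode} are built from the eigenvalues of $V$, you cannot identify $\Vq^\star$ with them until you know that $\mu$ equals the squeezed variance, i.e.\ $u\in\ker(V-\Vq^\star)$.

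You also need $p\ge\kappa$, where $p$ is the number of sub-vacuum eigenvalues of $\Vq^\star$, so that a one-mode optimizer really occurs with $\kappa=1$. The paper gets this from Weyl monotonicity: $V\succeq\Vq^\star$ gives $\lambda_k(V)\ge\lambda_k(\Vq^\star)$. It then runs through the cases $p\in\{0,1,2\}$. Note that the paper's own proof asserts the kernel property for $p=1$ without further argument. So the step you flagged is exactly where the real content lies.

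\textbf{How to close it.} Your suggested route works, with one extra ingredient: a rank bound.
\begin{itemize}
\item \emph{Rank bound.} For strictly mixed $V$, $\mathrm{rank}(V-\Vq^\star)\ge2$. Indeed, $\Vq^\star+\tfrac{i}{2}\Om$ has a two-dimensional complex kernel, and no positive perturbation of rank $\le1$ removes it. Hence $\mathrm{rank}\,S^\star\le2$.
\item \emph{Reformulating the Riccati identity.} In a passive frame, $\Vq^\star=\half ZZ^\top$ with $Z=\operatorname{diag}(\sqrt{t_1},\sqrt{t_2},1/\sqrt{t_1},1/\sqrt{t_2})$. The Riccati identity of Lemma~\ref{lem:preimage} becomes $[Z^\top(I+S)Z,\Om]=0$. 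Write $Z^\top SZ=\sum_k b_kb_k^\top$ with $b_k=(x_k,y_k)$ and $z_k=x_k+iy_k\in\mathbb{C}^2$. The condition then reads $\sum_k z_kz_k^\top=-\Delta$, where $\Delta:=\operatorname{diag}(t_1-t_1^{-1},\,t_2-t_2^{-1})$.
\item \emph{Case $p=1$} ($t_2=1$). The solutions of rank $\le2$ are either $S^\star\propto e_{q_1}e_{q_1}^\top$, or $\Ran S^\star$ equal to the whole first mode. The latter would make that mode of $V$ pure, contradicting strict mixedness. So the squeezed direction lies in $\ker(V-\Vq^\star)$ automatically.
\item \emph{Case $p=2$.} Up to real rotations of the pair $(b_1,b_2)$, the rank-two solutions are $(z_1,z_2)=i\Delta^{1/2}\bigl(\begin{smallmatrix}\cosh\beta&-i\sinh\beta\\ i\sinh\beta&\cosh\beta\end{smallmatrix}\bigr)$ with $\beta\in\R$. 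For $\beta=0$ the range of $S^\star$ is the squeezed plane. For $\beta\neq0$ the range is spanned by one vector in $\mathrm{span}(e_{p_2},e_{q_1})$ and one in $\mathrm{span}(e_{p_1},e_{q_2})$, each with both components nonzero. If $e_{q_1},e_{q_2}$ are eigenvectors of $V$, then $V-\Vq^\star$ is block-diagonal in the $p/q$ splitting, and these kernel vectors force $V-\Vq^\star=0$, again contradicting strict mixedness.
\end{itemize}
This is what makes ``eigenvector of $V$'' equivalent to ``in $\ker(V-\Vq^\star)$''.

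\textbf{A separate, smaller point.} Your limit along $V+\epsilon I$ recovers $\Vq^\star$ for boundary $V$. It does not by itself produce a dual maximizer $S^\star$, since nothing in the argument keeps $S^\star_\epsilon$ bounded.
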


\begin{proof}
Let $p\in\{0,1,2\}$ be the number of modes the optimizer squeezes, that is, the number of eigenvalues of $\Vq^\star$ that fall below $\half$. Two bounds frame $p$. On one side $p\le2$, since there are only two modes. On the other $p\ge\kappa$: the constraint $V\succeq\Vq^\star$ implies, by Weyl's monotonicity theorem (if $V-\Vq^\star\succeq0$ then the $k$-th largest eigenvalue of $V$ is at
least that of $\Vq^\star$ for every $k$), that $\Vq^\star$ has at least as many sub-vacuum eigenvalues as $V$, that is $p\ge\kappa$. Now consider each value of $p$. If $p=0$ the optimizer is the vacuum. If $p=1$, its single squeezed direction spans a one-dimensional kernel of $V-\Vq^\star$, hence is a common eigenvector of $V$ and $\Vq^\star$, and so is a sub-vacuum eigenvector of $V$; the state is then~\eqref{eq:onemode}. If $p=2$ and the two squeezed directions are eigenvectors of $V$, they must be compatible, giving~\eqref{eq:twomode}; otherwise the modes leave the eigenframe and the optimizer is~\eqref{eq:dual} by Theorem~\ref{thm:coupled}. In every case Lemma~\ref{lem:preimage} certifies optimality.
\end{proof}

The closed-form strata are \emph{not} indexed by $\kappa$ alone. Two squeezed modes can be forced even when $\kappa=1$: on an open, positive-measure set of one-nonclassical-direction states, squeezing the lone sub-vacuum direction by itself would push the classical remainder $V-\Vq$ to be indefinite, so the one-mode candidate~\eqref{eq:onemode} is infeasible; the optimizer then opens a second, weakly squeezed mode, fixed by the convex program~\eqref{eq:dual}. This set has positive measure: across broad random ensembles of $\kappa=1$ states the one-mode candidate is infeasible in a few percent of cases. The explicit
formulae thus settle $V$ exactly when its nonclassical sector decouples into modes aligned with the eigenvectors of $V$, and the convex program covers everything else, including these reopened $\kappa=1$ cases. 

This settles the two-mode case of the multimode problem of building the quantum-advantage resource $\Vq$ by ``nullifying the sub-vacuum eigenvalues'' of the classical part $V_c$ of the covariance $V$ that was left open in \cite{Entropy2026}.

\section{Conclusion}\label{sec:conclusion}

A Gaussian state of two bosonic modes is described, up to displacement, by a real symmetric $4\times4$ covariance matrix $V$. Its irreducible nonclassical content constitutes the quantum-advantage resource defined through a semidefinite program (\ref{eq:sdp}) that splits $V$ into a minimum-trace physical \emph{quantum} part $\Vq^\star$ and a positive \emph{classical} noise remainder \cite{Oh2024}. The associated photon number $\Nq(V)$ is a certified lower bound on the classical cost of
sampling the state. 

We determine the optimizer $\Vq^\star$ explicitly for every two-mode covariance $V$. The answer is organized by a single geometric quantity, the symplectic inner product $v_1^\top \Jc\, v_2$ of the nonclassical directions of $V$, where $\Jc$ is the quarter-turn phase rotation. When it vanishes the nonclassical sector decouples into independent single modes and $\Vq^\star$ is given in closed form, case by case according to the number of sub-vacuum directions; we exhibit a dual certificate in each case. When it does not vanish the squeezed modes of the optimizer leave the eigenframe of $V$, and we prove that no closed form can exist: the optimal trace is then algebraic of degree $12$ over $\Q(V)$ with Galois group the full symmetric group $S_{12}$, so by Galois' criterion it is not expressible in radicals. 

In all cases $\Vq^\star$ is recovered from the explicit smooth concave dual of Theorem~\ref{thm:coupled}, Eq.~\eqref{eq:dual}, which collapses to the closed forms exactly on the decoupled strata. For essentially every $V$ this dual is a more effective route to the optimizer than the primal semidefinite program~\eqref{eq:sdp} itself, and it clarifies the structure of the solution. It replaces the matrix-inequality-constrained primal, a search over the ten-dimensional space of symmetric matrices $\Vq$ subject to two semidefinite constraints, by the maximization of a single scalar concave function $D(S)$ of one positive-semidefinite multiplier $S$; the maximizer has rank at most two and is parametrized by $B\in\R^{4\times2}$ through $S=BB^\top$. Because $D$ is smooth and concave its maximizer is unique, and its gradient is given in closed form by the oracle, $\nabla D(S)=\Vq(I+S)-V$, so no inner semidefinite solve is required at any step. The same reduction exposes the algebraic structure of the problem: the degree-$12$ minimal polynomial and the Galois obstruction below.

Thus, we give the first complete, certificate-checked solution of the covariance program that isolates the quantum complexity resource in a genuinely coupled sector. The original difficult convex optimization problem (\ref{eq:sdp}) is solved by reduction to a simple dual problem with explicit analytical cost function in Eq.~(\ref{eq:dual}). Moreover, the optimal trace $\Tr(\Vq^\star)$, which fixes the resource $\Nq(V)$, is the least real root of a degree-$12$ polynomial $P(x;V)$, Eq.~(\ref{polynomial}), obtained by elimination from the stationarity conditions of the dual~(\ref{eq:dual}). By Proposition~\ref{prop:coeffs} its coefficients are polynomials in the passive invariants of $V$; the degree $12$ and the Galois group $S_{12}$, and hence the absence of a radical expression for the root, hold for every coupled $V$. 

The two-mode solution also makes precise a systematic gap between the resource and a commonly used proxy for it. The number of squeezed photons carried by the Bloch--Messiah supermodes of $V$ is frequently taken as an estimate of the quantum complexity resource (for discussion and references see \cite{Oh2024,OPAarXiv2026,Entropy2026,Pan2026}). As established in Ref.~\cite{OPAarXiv2026}, this squeezed-photon number is an upper bound on $\Nq(V)$, and the explicit two-mode solution exhibits the gap concretely: for coupled $V$ it generically exceeds $\Nq(V)$, often by a large factor. The resource $\Nq(V)$ is therefore never larger, and is typically appreciably smaller, than the number of squeezed photons in the Bloch--Messiah supermodes.

The two-mode solution, its algebraic and geometric structure, the explicit dual cost function, and the hierarchy of decoupled strata fully characterize the quantum-advantage resource in the two-mode light and inform the analysis of the general multimode problem.

\end{document}